\newtheoremstyle{mytheorem}  
  {4pt}                      
  {2pt}                      
  {\itshape}                 
  {}                         
  {\bfseries}                
  {.}                        
  { }                        
  {}                         
\theoremstyle{mytheorem}     
\newtheorem{prop}{Proposition}
\renewenvironment{proof}[1][\proofname]{\par
  \pushQED{\qed}%
  \normalfont \topsep2pt \partopsep4pt 
  \trivlist
  \item[\hskip\labelsep
        \itshape
    #1\@addpunct{.}]\ignorespaces
}{%
  \popQED\endtrivlist\@endpefalse
}
\titlespacing{\section}{0pt}{2.5ex plus .0ex minus .0ex}{.3ex plus .0ex} 
\titlespacing{\subsection}{0pt}{2ex plus .0ex minus .0ex}{.3ex plus .0ex} 
\begin{document}
\bstctlcite{BSTcontrol}

\title{\huge Degrees of Freedom of Holographic MIMO \\ in Multi-user Near-field Channels}

\author{
Houfeng~Chen,~\IEEEmembership{Student~Member,~IEEE,} 
Shaohua~Yue,~\IEEEmembership{Student~Member,~IEEE,} \\
Marco~Di~Renzo,~\IEEEmembership{Fellow,~IEEE,}
and~Hongliang~Zhang,~\IEEEmembership{Member,~IEEE} 

\vspace{-2.5em}
\thanks{Houfeng Chen, Shaohua Yue and Hongliang Zhang are with the State Key Laboratory of Advanced Optical Communication Systems and Networks, School of Electronics, Peking University, Beijing 100871, China (e-mail: \{houfengchen, yueshaohua\}@pku.edu.cn; hongliang.zhang92@gmail.com).}
\thanks{M. Di Renzo is with Universit\'e Paris-Saclay, CNRS, CentraleSup\'elec, Laboratoire des Signaux et Syst\`emes, 3 Rue Joliot-Curie, 91192 Gif-sur-Yvette, France. (marco.di-renzo@universite-paris-saclay.fr), and with King's College London, Centre for Telecommunications Research -- Department of Engineering, WC2R 2LS London, United Kingdom (marco.di\_renzo@kcl.ac.uk).}}





\maketitle

\begin{abstract}
Holographic multiple-input multiple-output (HMIMO) is an emerging technology for 6G communications, in which numerous antenna units are integrated in a limited space. As the HMIMO array aperture expands, the near-field region of the array is dramatically enlarged, resulting in more users being located in the near-field region. This creates new opportunities for wireless communications. In this context, the evaluation of the spatial degrees of freedom (DoF) of HMIMO multi-user systems in near-field channels is an open problem, as the analytical methods utilized for evaluating the DoF in far-field channels cannot be directly applied.
In this paper, we propose a novel method to calculate the DoF of HMIMO in multi-user near-field channels. We first derive the DoF for a single user in the near field, and then extend the analysis to multi-user scenarios. In this latter scenario, we focus on the impact of spatial blocking between HMIMO users. The derived analytical framework reveals that the DoF of HMIMO in multi-user near-field channels is not in general given by the sum of the DoF of the HMIMO single-user setting because of spatial blocking. 
Simulation results demonstrate the effectiveness of the proposed method. In the considered case study, the number of DoF reduces by $21.2\%$ on average due to spatial blocking.
\end{abstract}

\begin{IEEEkeywords}
Near-field communications, spatial degrees of freedom, holographic MIMO, multi-user systems.
\end{IEEEkeywords}

\section{Introduction}\label{Intro}
\IEEEPARstart{H}{olographic} multiple-input multiple-output (HMIMO) is an emerging technology for 6G communications, which integrates a large number of sub-wavelength antenna units on a finite-size surface~\cite{gong2023holographic,deng2022reconfigurable,elmossallamy2020reconfigurable}. As a result, HMIMO arrays significantly increase the number of degrees of freedom (DoF), which corresponds to the number of orthogonal sub-channels allowed by the communication channel. 
As the aperture of the HMIMO array increases, the near-field region of the antenna array is extended, which results in an increased number of users located in the near field of the HMIMO array. In the near-field region, the eletromagnetic waves are characterized by spherical wavefronts, in contrast to the far-field region in which the wavefront is planar.
The planar wavefront of far-field waves, which significantly simplifies the expression of the radiation kernels in far-field scenarios, is not sufficiently accurate to model spherical waves.
Therefore, the methods for the analysis of the DoF in far-field channels cannot be applied to the analysis of the DoF in near-field channels.

Most of the existing works on the DoF of MIMO systems are applicable to far-field channels~\cite{song2024line,sridharanDegreesFreedomMIMO2015}.
In \cite{song2024line}, a line-of-sight (LoS) MIMO system is analyzed based on the planar wave assumption and the number of DoF is analyzed.
The authors of \cite{sridharanDegreesFreedomMIMO2015} investigate the DoF of MIMO networks based on decomposition and linear beamforming.
As mentioned, however, these results cannot be directly applied to near-field communications. 
The analysis of the DoF in near-field free-space channels has been a subject of intense research activities for several years \cite{bucciDegreesFreedomScattered1989, millerCommunicatingWavesBetween2000,franceschettiDegreesFreedomWireless2011}. 
Thanks to the introduction of HMIMO, the analysis of the DoF in near-field channels has recently attracted renewed interest~\cite{dardariCommunicatingLargeIntelligent2020,pizzoLandauEigenvalueTheorem2022,Renzo2023}.
Two main methods exist for the analysis of the DoF in near-field channels: (1) the spatial bandwidth \cite{bucciDegreesFreedomScattered1989} and (2) Landau's eigenvalue theorem \cite{pizzoLandauEigenvalueTheorem2022}. These methods have been recently applied, e.g., in \cite{dardariCommunicatingLargeIntelligent2020}, \cite{dingDegreesFreedom3D2022} and \cite{Renzo2023}, respectively. 

However, the DoF analysis of near-field multi-user communication systems is not available in the literature.
The presence of multiple users equipped with HMIMO arrays brings new challenges when computing the number of DoF. Specifically, users equipped with HMIMO arrays may block the signal that other users equipped with HMIMO arrays receive from a base station (BS). In this case, the number of DoF of the multi-user channel may not be equal to the sum of the DoF of the users when they are the only ones available in the network, as the spatial blocking may reduce the number of DoF.

Motivated by these considerations, we introduce an analytical framework to compute the number of DoF in multi-user HMIMO near-field channels when the links from the BS to each user may be blocked by the presence of other users. The proposed approach for estimating the number of DoF is based on the spatial bandwidth method, which was introduced in \cite{BucciRepresentationElectromagneticFields1998, dardariCommunicatingLargeIntelligent2020}. We identify the conditions under which the number of DoF in multi-user networks is and is not equal to the sum of the DoF in single-user networks, and in the latter case the reduction in the number of DoF is quantified analytically. 

\section{System Model}\label{sysmod}
In this section, we first introduce the communication scenario for the considered HMIMO system. Then, we introduce the LoS near-field channel model. Finally, we review existing works on the attainable number of DoF in LoS channels for HMIMO-aided communications.

\subsection{Scenario Description}
As shown in Fig.~\ref{figure_model}, we consider an uplink HMIMO communication system where a BS is equipped with a large HMIMO surface, denoted by $\mathcal{R}$, for multi-user communications. Each user is equipped with a small HMIMO surface, denoted by $\mathcal{S}$, of the same size. Each HMIMO surface consists of a dense integration of numerous antenna units in a limited space, which can be approximated as a continuous surface. All the users are assumed to be located in the near-field region of the BS. Also, a LoS channel model for the BS-users links is considered.

\begin{figure}[!t]
  \centering
  \includegraphics[width=0.36\textwidth]{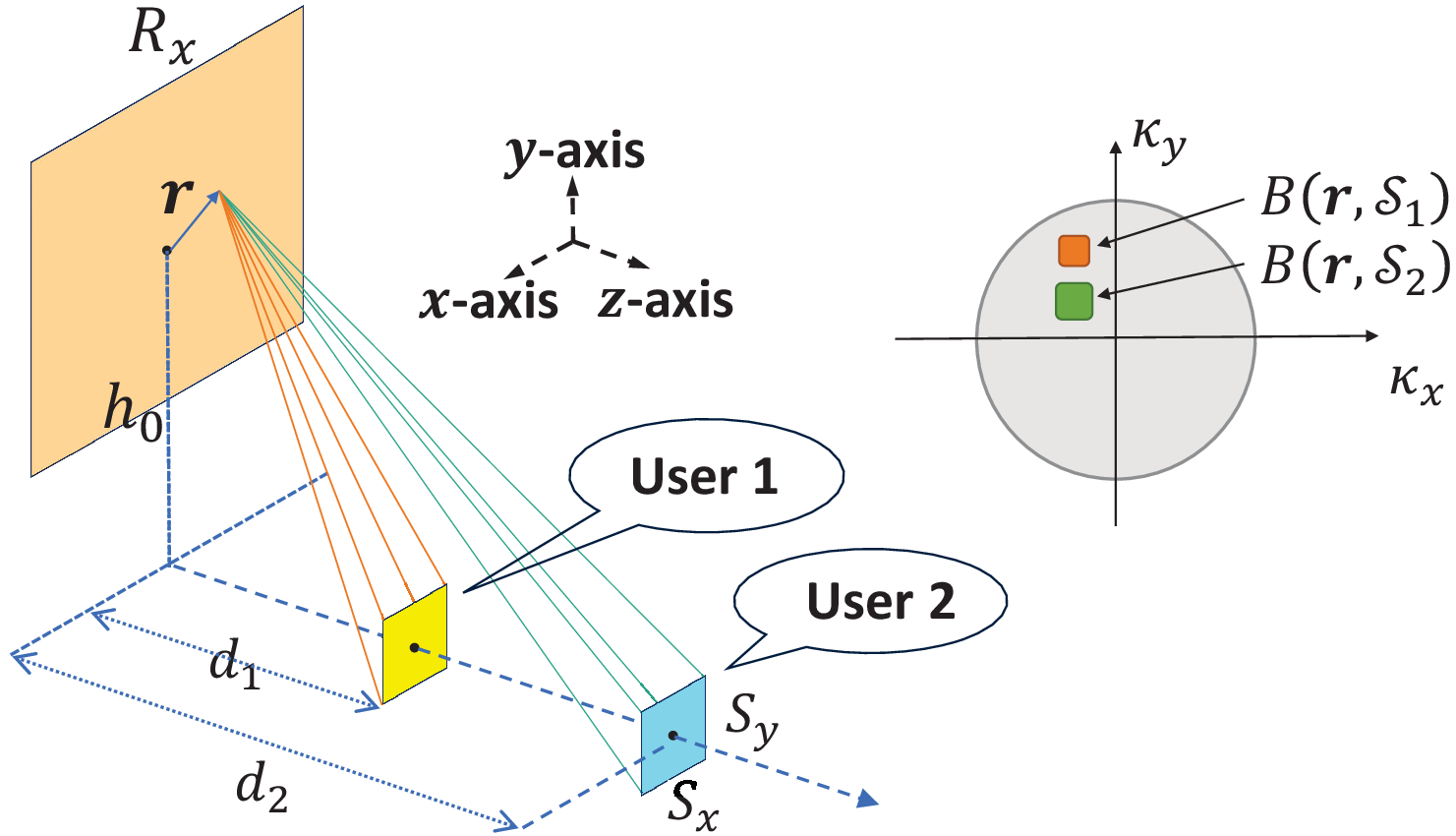}
  \caption{Considered HMIMO multi-user near-field system model.}
  \label{figure_model}
  \vspace{-1em}
\end{figure}

\subsection{LoS Channel Model}
Consider a generic user communicating with the BS using, for ease of analysis, scalar electromagnetic waves. 
The scalar electromagnetic field $E(\bm{r}), \bm{r}\in\mathcal{R}$, is the image of the current density $J(\bm{s}),\bm{s}\in\mathcal{S}$, which can be expressed through the linear channel operator $\mathcal{G}$ as \cite{pizzoLandauEigenvalueTheorem2022}
\begin{equation}
    \label{channel model}
    E(\bm{r})=(\mathcal{G}J)(\bm{r})=\int_{\mathcal{S}} h(\bm{r},\bm{s})J(\bm{s}){\rm d} \bm{s},
\end{equation}
where $h(\bm{r}, \bm{s})$ is the space-variant kernel induced by the operator. In LoS channels, it is determined by the inhomogeneous Helmholtz wave equation
and can be expressed as~\cite{dardariCommunicatingLargeIntelligent2020}
\begin{align}
    \label{Green_function}
    h(\bm{r}, \bm{s}) = -\mathrm{j} \kappa_0 \eta \frac{e^{\mathrm{j}\kappa_0 \left\Vert \bm{r}-\bm{s} \right\Vert}}{4\pi \left\Vert \bm{r}-\bm{s} \right\Vert},
\end{align}
The model in (\ref{Green_function}) is known as the non-uniform spherical
wave (NUSW) near-field channel model \cite{dingDegreesFreedom3D2022}. In the considered communication system, the electric field $E(\bm{r})$ at the observation point $\bm{r} \in \mathcal{R}$ is assumed to be measured perfectly~\cite{dingDegreesFreedom3D2022}, and the transmission of information between $\mathcal{S}$ and $\mathcal{R}$ is executed through the mapping
\begin{align}
    \label{mapping}
    J(\bm{s}), \bm{s} \in \mathcal{S}, \longrightarrow E(\bm{r}), \bm{r} \in \mathcal{R},
\end{align}%
where $J(\bm{s})$ and $E(\bm{r})$ denote the transmitted and received messages, respectively.

\subsection{Spatial Degrees of Freedom}
The number of spatial DoF available in a LoS communication channel between $\mathcal{S}$ and $\mathcal{R}$ is determined by the number of orthogonal sub-channels, allowed by the mapping in (\ref{mapping}), from the source signal space to the receiving signal space through the linear channel operator $\mathcal{G}$~\cite{Renzo2023}. Each sub-channel corresponds to an encoding function in the source signal space and a decoding function in the receiving signal space. These two sets of functions can be obtained by solving a pair of coupled eigenfunction problems, as detailed in, e.g., \cite{dardariCommunicatingLargeIntelligent2020}.
Generally, there exist an infinite number of eigenfunction pairs. However, due to the inherent band-limiting properties \cite{bucciSpatialBandwidthScattered1987} of the mapping in (\ref{mapping}), many eigenvalues are approximately equal to zero, and thus the corresponding eigenfunction pairs is not suitable for communication. If we sort all the eigenvalues in a decreasing order, there exists a transition zone where the eigenvalues rapidly tend to zero. 

Therefore, only a limited number of eigenvalues is significant, which allows for a precise definition of DoF: Sorting the eigenvalues in decreasing order as $\sigma^2_1 \geq \sigma^2_2 \geq \cdots \geq \sigma^2_n \geq \cdots$, the number of DoF, denoted as $N_\epsilon$, given a specified level of representation accuracy $\epsilon$, is defined as the index for which $\sigma^2_{N_{\epsilon}-1} > \epsilon$ and $\sigma^2_{N_\epsilon} \leq \epsilon$, which ensures that any electric field can be represented using the first $N_\epsilon$ eigenfunctions, as the signal reconstruction error is less than or equal to $\epsilon$. The number $N_\epsilon$ is referred to as the number of \textit{effective} DoF.

\subsection{Spatial Frequency and Spatial Bandwidth}\label{subsection_spatial_bandwidth}
In this paper, we compute the number of effective DoF in near-field channels based on the spatial bandwidth method\cite{bucciDegreesFreedomScattered1989,dardariCommunicatingLargeIntelligent2020,bucciSpatialBandwidthScattered1987}. More specifically, we utilize the approximation introduced in \cite{dardariCommunicatingLargeIntelligent2020}.
To elaborate, given a 2D HMIMO array, the effective DoF can be approximated by~\cite{dardariCommunicatingLargeIntelligent2020}
\begin{align}
    \label{DoF}
    N=\frac{1}{(2\pi)^2}\int_\mathcal{R} B(\bm{r}, \mathcal{S}) {\rm d}\bm{r},
\end{align}
where $B(\bm{r}, \mathcal{S})$ is the \textit{local spatial bandwidth} 
of the electric field radiated by $\mathcal{S}$ and measured at $\bm{r} \in \mathcal{R}$. A justification for (\ref{DoF}) can be found in \cite[Sec. III-B]{franceschettiDegreesFreedomWireless2011}.

To formulate the spatial bandwidth, we need to first introduce the concept of \textit{local spatial frequency}. To this end, we utilize the definitions and analytical development summarized in \cite{dingDegreesFreedom3D2022}. 
Specifically, assuming that the observation position $\bm{r}$ moves a small step ${\rm d} \bm{p} = ({\rm d} x, {\rm d} y)$ along the receiving surface $\mathcal{R}$, such that the incident directions of all source points on the transmitting surface $\mathcal{S}$ can be regarded unchanged, the phase of the wave component generated by the point source $\bm{s}$, i.e., $h(\bm{r}, \bm{s})J (\bm{s}) {\rm d} \bm{s}$, is changed by
\begin{align}
    \label{kappa_first}
    \bm{\kappa} (\bm{r}, \bm{s}) \cdot {\rm d} \bm{p} = \kappa_0  \left(\bm{\hat{v}}-\bm{\hat{n}}(\bm{\hat{v}}\cdot \bm{\hat{n}})\right) \cdot {\rm d} \bm{p}
\end{align}
where $\bm{\hat{v}} = \frac{\bm{r} - \bm{s}}{\left\Vert \bm{r} - \bm{s} \right\Vert}$, $\bm{\hat{n}}$ is the unit vector perpendicular to the surface at the point $\bm{r}$, and $(\cdot)$ denotes the scalar product between vectors. The right-hand side of (\ref{kappa_first}) is termed as the \textit{spatial frequency} of the integrand in (\ref{channel model}) \cite{dingDegreesFreedom3D2022}.
From (\ref{kappa_first}), assuming that the transmitting and receiving HMIMO surfaces lie on the $xy$-plane (as shown in Fig. \ref{figure_model}), we have $\hat{\bm{n}} = \hat{\bm{z}}$ and
\begin{align}
    \label{spatial_frequency}
    \bm{\kappa} (\bm{r}, \bm{s}) = \kappa_0  (\bm{\hat{v}}-\bm{\hat{n}}(\bm{\hat{v}}\cdot \bm{\hat{n}})) = \left( \kappa_x \left(\bm{r}, \bm{s}\right), \kappa_y \left(\bm{r}, \bm{s}\right) \right).
\end{align}%
Therefore, for   $\bm{r} \!=\! (r_x, r_y, 0)$ and $\bm{s} \!=\! (s_x, s_y, s_z)$, we get \cite{BucciRepresentationElectromagneticFields1998}
\begin{subequations}
    \vspace{-0.8em}
    \label{spatial_frequency_x_y}
    \begin{align}
        \kappa_x \left(\bm{r}, \bm{s}\right) = \kappa_0 \frac{r_x-s_x}{\sqrt{(r_x-s_x)^2 + (r_y-s_y)^2 + s_z^2}}  \\
        \kappa_y \left(\bm{r}, \bm{s}\right) = \kappa_0 \frac{r_y-s_y}{\sqrt{(r_x-s_x)^2 + (r_y-s_y)^2 + s_z^2}}.
    \end{align}
\end{subequations}
The \textit{local spatial bandwidth} in the wavenumber domain 
(i.e., in the $\kappa_x\!-\!\kappa_y$ plane) evaluated at the point $\bm{r}$ on $\mathcal{R}$ is defined as the maximum wavenumber spread evaluated across all the point sources on $\mathcal{S}$~\cite{dardariCommunicatingLargeIntelligent2020}. It can be expressed as 
\begin{align}
    \label{local_spatial_bandwidth}
        B(\bm{r}, \mathcal{S})={\rm m}\left(\bm{\kappa} (\bm{r}, \bm{s})\right)_{\bm{s} \in \mathcal{S}}, 
\end{align}
where ${\rm m}(\cdot)_{\bm{s} \in \mathcal{S}}$ is the Lebesgue measure of the region in the wavenumber domain spanned by $\bm{\kappa}(\bm{r}\!, \bm{s})$ when $\bm{s}$ varies on $\mathcal{S}$.

\section{Number of Effective DoF} \label{section3}
In this section, we utilize (\ref{spatial_frequency_x_y}) and (\ref{local_spatial_bandwidth}) to compute the number of effective DoF in single and multiple user settings. As for the single user setting, we generalize the framework in \cite{dardariCommunicatingLargeIntelligent2020} by considering that the center-points of the user and BS HMIMO surfaces on the $xy$-plane do not necessarily coincide with one another. As for the multi-user case, no analytical expressions exist in the open technical literature.

\subsection{Single User Setup}
Without loss of generality, we assume that the HMIMO surfaces of the BS and user are parallel, with a height difference $h_0$ between their center points along the $y$-axis. $S_x$, $S_y$, and $R_x$, $R_y$ denote the side lengths of the user and BS HMIMO surfaces, respectively. 
Both HMIMO surfaces are perpendicular to the $z$-axis as shown in Fig.~\ref{figure_model}, and the distance between their center points along the $z$-axis is $d$.

From (\ref{spatial_frequency_x_y}) and (\ref{local_spatial_bandwidth}), assuming that the center-point of $\mathcal{S}$ on the $xy$-plane is at $(0,0)$, the local spatial bandwidth evaluated at the point $\bm{r}$ can be calculated as 
\begin{flalign}
    B(\bm{r},& \mathcal{S}) \!=\! \int_{\mathcal{A}\left( \kappa(\bm{r}, \bm{s}) \right)_{\bm{s} \in \mathcal{S}}} {\rm d}\kappa_x {\rm d}\kappa_y  \label{spatial_bandwidth_approx_2_1} \\
    \!\overset{(a)}{=}\!& \int_{-\frac{S_x}{2}}^{\frac{S_x}{2}} \! \int_{-\frac{S_y}{2}}^{\frac{S_y}{2}}
    \! \frac{\kappa_0^2 d^2}{\left(d^2\!+\!(r_x\!-\!s_x)^2\!+\!(r_y\!-\!s_y)^2 \right)^2} {\rm d}s_x {\rm d}s_y,  \label{spatial_bandwidth_approx_2_2}
\end{flalign}%
where $\mathcal{A}\left(\kappa\left(\bm{r},\bm{s}\right)\right)_{\bm{s} \in \mathcal{S}}$ represents the region covered by $\kappa\left(\bm{r},\bm{s}\right)$ in the wavenumber domain when $\bm{s}$ varies within $\mathcal{S}$. The derivation of the transformation in (a) is provided in Appendix \ref{appendix1}. Thus, assuming that the center-point of $\mathcal{R}$ on the $xy$-plane is at $(0,h_0)$ and according to (\ref{DoF}) and (\ref{spatial_bandwidth_approx_2_2}), the number of effective DoF can be formulated as 
\begin{align}
    \label{DoF_exact_integral}
    N \!\!=\! \!\frac{1}{\!(2\pi)^{\!2}\!} \!\!\int_{-\frac{R_x}{2}}^{\frac{R_x}{2}} \!\!\int_{-\frac{R_y}{2}+h_0}^{\frac{R_y}{2}+h_0} \!\int_{-\frac{S_x}{2}}^{\frac{S_x}{2}} \!\!\int_{-\frac{S_y}{2}}^{\frac{S_y}{2}}\!
    \frac{\kappa_0^2 d^2 {\rm d}s_x {\rm d}s_y{\rm d}r_x {\rm d}r_y }{\!\left(d^2\!\!+\!(r_x\!\!-\!s_x)^{\!2}\!+\!(r_y\!\!-\!s_y)^{\!2} \right)^{\!2}\!} 
\end{align}
However, the obtained quadruple integral does not admit a closed-form expression. To obtain a closed-form expression, we assume that $S_x, S_y \ll d$, since the side of the HMIMO surface of the user is usually much smaller than the propagation distance in many wireless scenarios. Then, the effective DoF can be approximated as 
\begin{flalign}
    \label{DoF_one_user}
    N & \!\!\approx \! \frac{\!S_x S_y\!}{\lambda^2} \!\Bigg\{\! \frac{R_x}{\!\!\sqrt{\!4d^2\!\!+\!\!R_x^2}} \!\!\left[ \!\arctan\!\!\left(\!\frac{R_y\!+\!2h_0}{\!\!\sqrt{\!4d^2\!\!+\!\!R_x^2}} \! \right) \!\!+\! \arctan\!\! \left(\!\frac{R_y\!-\!2h_0}{\!\!\sqrt{\!4d^2\!\!+\!\!R_x^2\!}} \! \right) \! \right]   \nonumber \\
    +& \frac{R_y\!+\!2h_0}{\!\sqrt{\!4d^2\!+\!(R_y\!+\!2h_0)^2}} \arctan \!\left(\!\frac{R_x}{\!\sqrt{\!4d^2\!+\!(R_y\!+\!2h_0)^2}} \!\right)  \nonumber \\
    +& \frac{R_y\!-\!2h_0}{\!\sqrt{\!4d^2\!+\!(R_y\!-\!2h_0)^2}} \arctan \!\left(\!\frac{R_x}{\!\sqrt{\!4d^2\!+\!(R_y\!-\!2h_0)^2}} \!\right)\! \Bigg\}.
\end{flalign}
The details of the derivation can be found in Appendix \ref{appendix2}. It is worth noting that (\ref{DoF_one_user}) coincides with \cite[Eq. (31)]{dardariCommunicatingLargeIntelligent2020} if $h_0=0$, as expected. In addition, we note that (\ref{DoF_exact_integral}) and (\ref{DoF_one_user}) are obtained without approximating the Lesbegue measure in (10), rather than by using, as done in \cite{dardariCommunicatingLargeIntelligent2020}, a quadrilateral to approximate it.

\begin{figure}[t]	
	\subfigure[] 
	{
		\begin{minipage}{4cm}
			\centering          
			\includegraphics[width=1.08\textwidth]{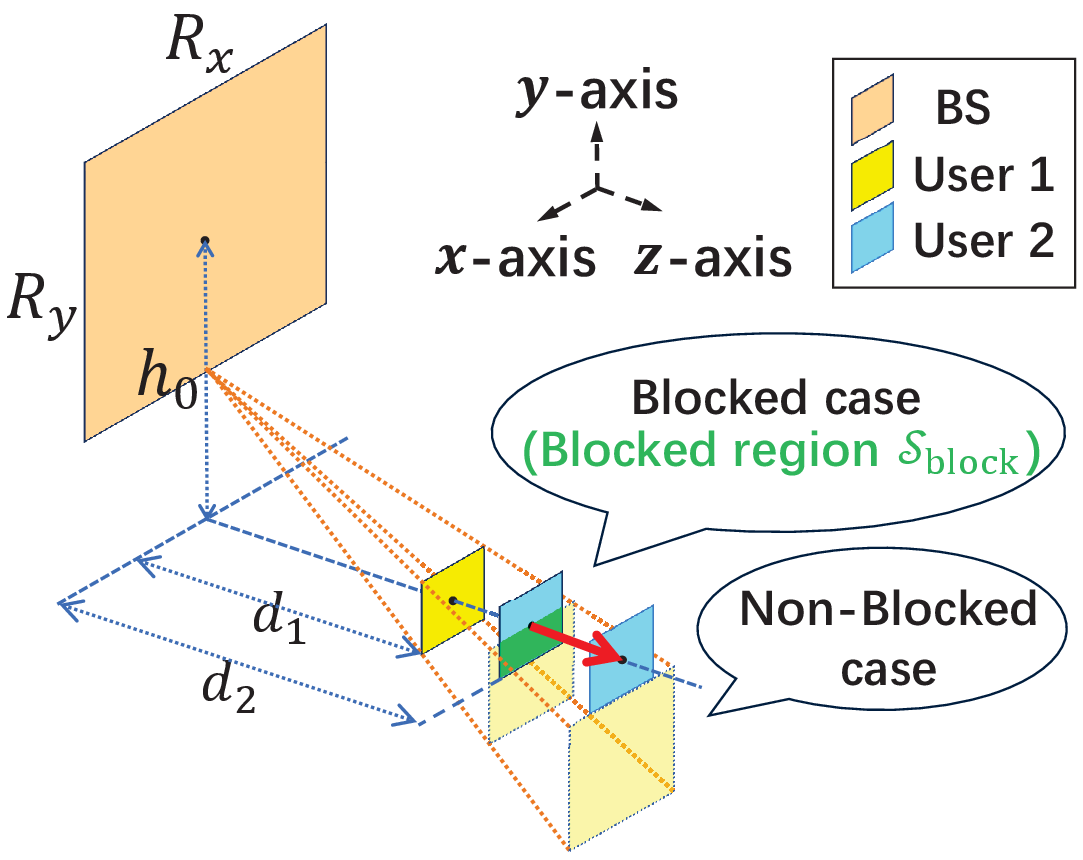}
      \label{fig:figure_blocked_and_not_scenario_1}
      \vspace{-0.8em}
		\end{minipage}
	}
  \hfill
	\subfigure[] 
	{
		\begin{minipage}{4cm}
			\centering      
			\includegraphics[width=1.08\textwidth]{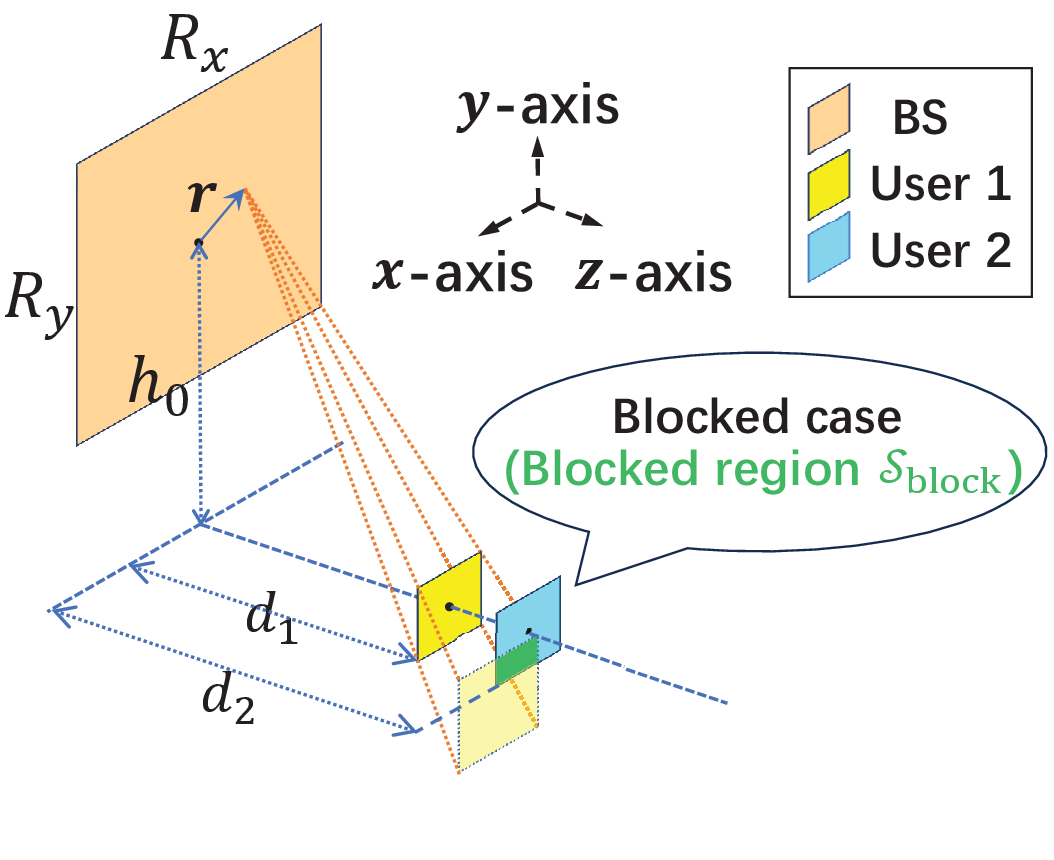} 
      \label{fig:figure_blocked_and_not_scenario_2}
      \vspace{-0.8em}
		\end{minipage}
	}
  \vspace{-0.8em}
	\caption{(a) Two case studies in which spatial blocking is not and is present; (b) The spatial blocking region for any $\bm{r}$ when spatial blocking cannot be ignored.} 
	\label{fig:figure_blocked_and_not_scenario}
  \vspace{-1.5em}
\end{figure}

\subsection{Multiple User Setup}
When multiple users communicate with the BS simultaneously, the region in the wavenumber domain spanned by the local spatial frequency corresponding to the HMIMO surfaces of different users, when evaluated at the same point $\bm{r}$, may overlap. In this case, the number of effective DoF is not equal to the sum of the number of effective DoF of each user, which is obtained when each user is the only one available in the network, i.e., the single-user case. The overlap in the wavenumber domain needs to be considered, which occurs when the link of a generic user and the BS is spatially blocked by any  other users' HMIMO surfaces. In this section, we analyze the case studies in which the spatial blocking is not and is present (as shown in Fig. \ref{fig:figure_blocked_and_not_scenario} for some case studies) in Proposition \ref{theorem1} and Proposition \ref{prop3}, respectively. For simplicity, the HMIMO surfaces of all the users have the same size.

\begin{prop}
    \label{theorem1}
    When the HMIMO surfaces of multiple users do not block each other at any observing position $\bm{r}$ on the HMIMO surface of the BS, the number of effective DoF is equal to the sum of the number of effective DoF for all the users, each one calculated in the single-user setting.
\end{prop}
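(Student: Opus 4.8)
The plan is to reduce the claim to the finite additivity of the Lebesgue measure that appears in the definition (\ref{local_spatial_bandwidth}) of the local spatial bandwidth. With $K$ users whose surfaces are $\mathcal{S}_1,\dots,\mathcal{S}_K$, the relevant source region seen at a BS point $\bm{r}$ is the union $\mathcal{S}=\bigcup_{k=1}^{K}\mathcal{S}_k$, so the multi-user local spatial bandwidth is $B(\bm{r},\mathcal{S})=\mathrm{m}\big(\bigcup_{k=1}^{K}\mathcal{A}_k(\bm{r})\big)$, where $\mathcal{A}_k(\bm{r})$ is the region swept in the $\kappa_x$--$\kappa_y$ plane by $\bm{\kappa}(\bm{r},\bm{s})$ as $\bm{s}$ ranges over $\mathcal{S}_k$, and the single-user bandwidth of user $k$ is $B(\bm{r},\mathcal{S}_k)=\mathrm{m}\big(\mathcal{A}_k(\bm{r})\big)$. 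The goal is therefore to show that, under the no-blocking hypothesis, $\mathrm{m}$ distributes over this union, and then to integrate over $\mathcal{R}$.

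First I would make the no-blocking hypothesis precise in the wavenumber domain. Since $\bm{\kappa}(\bm{r},\bm{s})=\kappa_0\big(\hat{\bm{v}}-\hat{\bm{n}}(\hat{\bm{v}}\cdot\hat{\bm{n}})\big)$ is the transverse component of the unit arrival direction $\hat{\bm{v}}=(\bm{r}-\bm{s})/\left\Vert\bm{r}-\bm{s}\right\Vert$, and because all source points lie in the half-space on one side of the BS plane, the longitudinal component of $\hat{\bm{v}}$ has a fixed sign; hence the direction-to-wavenumber map $\hat{\bm{v}}\mapsto\bm{\kappa}$ is injective. Two source points then share a wavenumber if and only if they are collinear with $\bm{r}$, i.e., one occludes the other as seen from $\bm{r}$. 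Consequently, the statement that the users do not block one another at any $\bm{r}\in\mathcal{R}$ translates into the geometric condition that the regions $\mathcal{A}_1(\bm{r}),\dots,\mathcal{A}_K(\bm{r})$ have pairwise-disjoint interiors for every $\bm{r}$, so that each pairwise overlap $\mathcal{A}_j(\bm{r})\cap\mathcal{A}_k(\bm{r})$ with $j\neq k$ is confined to a common boundary and has zero Lebesgue measure.

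Next I would invoke finite additivity of the two-dimensional Lebesgue measure over sets that overlap only on a null set, which gives, pointwise in $\bm{r}$,
\[
    B(\bm{r},\mathcal{S}) = \mathrm{m}\!\Big(\bigcup_{k=1}^{K}\mathcal{A}_k(\bm{r})\Big) = \sum_{k=1}^{K}\mathrm{m}\big(\mathcal{A}_k(\bm{r})\big) = \sum_{k=1}^{K}B(\bm{r},\mathcal{S}_k).
\]
Integrating this identity over $\mathcal{R}$ and using linearity of the integral together with the DoF formula (\ref{DoF}), I would obtain $N=\frac{1}{(2\pi)^2}\int_{\mathcal{R}}B(\bm{r},\mathcal{S})\,{\rm d}\bm{r}=\sum_{k=1}^{K}\frac{1}{(2\pi)^2}\int_{\mathcal{R}}B(\bm{r},\mathcal{S}_k)\,{\rm d}\bm{r}=\sum_{k=1}^{K}N_k$, which is exactly the claim, with each $N_k$ being the single-user DoF given by (\ref{DoF_one_user}) for the geometry of user $k$.

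The main obstacle is not the measure-theoretic additivity, which is elementary once the setup is in place, but rather establishing cleanly the two structural facts used above: that the multi-user bandwidth equals the measure of the \emph{union} of the single-user wavenumber regions, and that ``no physical blocking at $\bm{r}$'' is equivalent to those regions having disjoint interiors. The latter equivalence rests on the injectivity of $\bm{s}\mapsto\bm{\kappa}(\bm{r},\bm{s})$ across $\mathcal{S}$, which collapses each ray through $\bm{r}$ to a single wavenumber; the delicate point is to confirm that this map fails to be injective \emph{only} at occluded pairs, so that absence of occlusion forces the $\mathcal{A}_k(\bm{r})$ to meet on null sets for every $\bm{r}\in\mathcal{R}$, and to verify that the exceptional grazing configurations are excluded by the finite-aperture near-field geometry.
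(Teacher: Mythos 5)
Your proof is correct and follows essentially the same route as the paper: the paper likewise translates the no-blocking hypothesis into disjointness of the wavenumber-domain regions $\mathcal{A}\left(\kappa\left(\bm{r},\bm{s}\right)\right)_{\bm{s}\in\mathcal{S}_i}$, applies additivity of the Lebesgue measure to get $B(\bm{r},\mathcal{S})=\sum_{i}{\rm m}\left(\bm{\kappa}(\bm{r},\bm{s})\right)_{\bm{s}\in\mathcal{S}_i}$, and sums the resulting single-user DoF via the integral in (\ref{DoF}). The only difference is that you explicitly justify this translation via injectivity of the direction-to-wavenumber map on one side of the BS plane (so overlap occurs only for occluded, collinear source pairs), a step the paper asserts without proof in (\ref{proof_theorem1_1}); this is a welcome tightening rather than a different approach.
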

\begin{proof}
    In mathematical terms, the absence of spatial blocking among the HMIMO surfaces of the users is the following:
{\begin{align}
    \label{proof_theorem1_1}
    \mathcal{A}\left(\kappa\left(\bm{r},\bm{s}\right)\right)_{\bm{s} \in \mathcal{S}_i} \cap \mathcal{A}\left(\kappa\left(\bm{r},\bm{s}\right)\right)_{\bm{s} \in \mathcal{S}_j} = \varnothing , i \neq j.
\end{align}%
}%
By virtue of the properties of the Lesbegue measure applied to disjoint sets, from (\ref{local_spatial_bandwidth}) we obtain the following: 
\begin{align}
    \label{proof_theorem1_2}
    \!\! B(\bm{r}, \mathcal{S}) \!=\! {\rm m}\!\left(\bm{\kappa} (\bm{r}, \bm{s})\right)_{\bm{s} \in \mathcal{S}_1 \cup \cdots \cup \mathcal{S}_L} 
    \!=\!\! \sum_{i=1}^{L} {\rm m}\!\left(\bm{\kappa} (\bm{r}, \bm{s})\right)_{\bm{s} \in \mathcal{S}_i},
\end{align}%
where $L$ denotes the number of users, and $\mathcal{S}_1, \!\cdots\!, \mathcal{S}_L$ represent the HMIMO surfaces of the users. Then, the number of effective DoF is equal to the sum of the number of effective DoF of the users, each one evaluated in the single-user case.
\end{proof}
Based on Proposition \ref{theorem1}, we aim to identify the geometric conditions under which the spatial blocking between two generic users occurs.
To this end, we consider the setup in Fig.~\ref{figure_model}, and consider User 1 and User 2 as a reference to derive the absence of spatial blocking between them. Based on Fig. 2, the distances from User 1 and User 2 to the BS (along the $z$-axis) are denoted by $d_1$ and $d_2$, respectively.

\begin{prop}\label{corollary1}
In the considered scenario, there is no spatial blocking between the links of the two users and the BS when the height $h_0$ of the HMIMO surface of the BS and the distance between the two users $\Delta d = d_2-d_1$ satisfy the conditions: 
\begin{align}
  \label{corollary1_1}
    h_0 \geq \frac{R_y+S_y}{2}, \quad
        \Delta d \geq \left(\frac{2S_y}{2h_0-S_y-R_y}\right)d_1.
\end{align}%
\end{prop}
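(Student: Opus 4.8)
The plan is to invoke Proposition~\ref{theorem1}, so that it suffices to exhibit geometric conditions under which the wavenumber regions $\mathcal{A}(\kappa(\bm r,\bm s))_{\bm s\in\mathcal{S}_1}$ and $\mathcal{A}(\kappa(\bm r,\bm s))_{\bm s\in\mathcal{S}_2}$ are disjoint for \emph{every} observation point $\bm r\in\mathcal{R}$. Because the two user surfaces share the same $x$-center and the same side length $S_x$, their $\kappa_x$-ranges coincide, so disjointness cannot be obtained in the $\kappa_x$ direction; instead I would obtain a \emph{sufficient} condition by forcing the $\kappa_y$-ranges to be disjoint, since two regions whose projections onto the $\kappa_y$-axis do not overlap are themselves disjoint.

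The key simplification is that, by (\ref{spatial_frequency_x_y}), $\kappa_y(\bm r,\bm s)=\kappa_0\sin\alpha$, where $\alpha$ is the elevation angle of the ray from $\bm s$ to $\bm r$ measured from the array normal, with $\tan\alpha=(r_y-s_y)/\sqrt{(r_x-s_x)^2+d_i^2}$ for a user at distance $d_i$. Since $\sin$ is monotone on $[0,\pi/2)$, I would first impose that all these angles are non-negative, i.e. $r_y\ge s_y$ for every admissible pair of points; as the lowest observation point has $r_y=h_0-R_y/2$ and the highest source point has $s_y=S_y/2$, this yields the first condition $h_0\ge (R_y+S_y)/2$. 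Under this condition every $\kappa_y>0$, and the $\kappa_y$-range of user $i$ at $\bm r$ is an interval determined monotonically by the elevation angles at its top and bottom edges $s_y=\pm S_y/2$. Because $d_2>d_1$, the interval of the farther user (User~2) sits at strictly smaller angles than that of the closer user (User~1).

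Disjointness then reduces to requiring the largest angle of User~2 to fall below the smallest angle of User~1, i.e. $(r_y+S_y/2)/d_2\le (r_y-S_y/2)/d_1$. I would enforce this at the worst-case observation height $r_y=h_0-R_y/2$, the smallest $r_y$, where the inequality is tightest, then cross-multiply and substitute $d_2=d_1+\Delta d$. Rearranging the resulting linear inequality to isolate $\Delta d$ produces the second condition $\Delta d\ge 2S_y d_1/(2h_0-S_y-R_y)$, completing the argument.

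The main obstacle I anticipate is the dependence of $\kappa_y$ on the transverse coordinates $(r_x,s_x)$ through the term $(r_x-s_x)^2$ in the denominator, which means the exact $\kappa_y$-range is not captured by the planar ($y$--$z$) elevation angles alone. I would handle this by noting that overlap of the full two-dimensional regions is equivalent to the existence of a common radiated direction $\hat{\bm v}$, i.e. to collinearity of $\bm r$, a point of $\mathcal{S}_1$, and a point of $\mathcal{S}_2$ (equivalently, to the segment from $\bm r$ to $\mathcal{S}_2$ piercing $\mathcal{S}_1$). Projecting this occlusion condition onto the $y$-axis, by requiring the piercing height at the plane $z=d_1$ to remain above $S_y/2$, returns exactly the planar inequality above and certifies it as a genuine sufficient no-blocking condition, independent of $r_x$. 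Verifying this equivalence, and confirming that the extreme points $s_y=-S_y/2$ and $r_y=h_0-R_y/2$ indeed give the binding case, is the part that will require the most care.
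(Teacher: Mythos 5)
Your proposal is correct and, after the detour through wavenumber-domain disjointness, reduces to exactly the paper's argument: the occlusion (collinearity) criterion you arrive at is the paper's projection of User 1's surface onto User 2's plane, and your worst-case evaluation at $r_y = h_0 - R_y/2$ with the edge points $s_y = \pm S_y/2$ reproduces both conditions in (\ref{corollary1_1}) verbatim. The only added value over the paper's one-line proof is that you explicitly justify why geometric occlusion is the right criterion for overlap of the regions $\mathcal{A}(\kappa(\bm{r},\bm{s}))$, a link the paper leaves implicit.
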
%
\begin{proof}
    As shown in Fig. \ref{fig:figure_blocked_and_not_scenario_1}, the conditions can be obtained by determining when the projection of User 1's HMIMO surface onto the plane where User 2's HMIMO surface is located does not overlap with User 2's HMIMO surface.
\end{proof}

If the conditions in (\ref{corollary1_1}) is not fulfilled, (\ref{proof_theorem1_1}) is not fulfilled and (\ref{proof_theorem1_2}) cannot be utilized, i.e., the Lesbegue measure of the local spatial frequency in the multi-user case is not the sum of the Lesbegue measures in the single-user case. In Proposition \ref{prop3}, we analyze the case study in which $h_0 \!>\! \frac{(R_y+S_y)}{2}$ is fulfilled but the second inequality in (\ref{corollary1_1}) is not.

\begin{prop}\label{prop3}
    When $0 \leq \Delta d < \left(\frac{2S_y}{2h_0-S_y-R_y} \right)d_1$ and $h_0 \geq \frac{R_y+S_y}{2}$, the number of effective DoF for the two-user setting shown in Fig.~\ref{fig:figure_blocked_and_not_scenario_2}, can be calculated as
    \begin{align}
        \label{theorem2_1}
        N = N_1 + N_2 - N_{\rm {blocked}},
    \end{align}
    where $N_1$ and $N_2$ are the numbers of effective DoF of User 1 and User 2 in the single-user case, respectively, and $N_{\rm {blocked}}$ accounts for the spatial blocking between the HMIMO surfaces of User 1 and User 2, which can be calculated as
    \begin{align}
        \label{theorem2_2}
        N_{\rm {blocked}} = \frac{1}{(2\pi)^2} \int_\mathcal{R} B(\bm{r}, \mathcal{S}_{\rm {blocked}}(\bm{r}))_{\rm overlap} {\rm d}\bm{r}, 
        \\
        B(\bm{r}, \mathcal{S}_{\rm {blocked}}(\bm{r}))_{\rm {overlap}} = {\rm m}\left(\bm{\kappa} (\bm{r}, \bm{s})\right)_{\bm{s}\in \mathcal{S}_{\rm blocked}(\bm{r})},
    \end{align}
where $\mathcal{S}_{\rm {blocked}}(\bm{r})$ denotes the portion of User 2's surface that is blocked by User 1's surface, and $B(\bm{r}, \mathcal{S}_{\rm {blocked}}(\bm{r}))_{\rm {overlap}}$ denotes the corresponding local spatial bandwidth. The expression of $\mathcal{S}_{\rm {blocked}}(\bm{r})$ can be found in (\ref{m_s_blocked}) at the top of this page, where $\otimes$ denotes the Cartesian product of sets.
\end{prop}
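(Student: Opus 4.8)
The plan is to evaluate the two-user number of effective DoF directly from the spatial-bandwidth representation in (\ref{DoF}) with the combined aperture $\mathcal{S} = \mathcal{S}_1 \cup \mathcal{S}_2$, reducing everything to a pointwise statement about the local spatial bandwidth $B(\bm{r}, \mathcal{S}_1 \cup \mathcal{S}_2)$. Writing $A_i(\bm{r}) = \mathcal{A}(\bm{\kappa}(\bm{r}, \bm{s}))_{\bm{s} \in \mathcal{S}_i}$ for the region swept in the wavenumber domain by the $i$-th aperture, and noting that the image of a union is the union of images, the relevant set is $A_1(\bm{r}) \cup A_2(\bm{r})$. The first step is therefore to invoke inclusion--exclusion for the Lebesgue measure,
\begin{equation}
  B(\bm{r}, \mathcal{S}_1 \cup \mathcal{S}_2) = \mathrm{m}(A_1) + \mathrm{m}(A_2) - \mathrm{m}(A_1 \cap A_2),
\end{equation}
which, after insertion into (\ref{DoF}) and splitting the integral over $\mathcal{R}$, immediately yields $N = N_1 + N_2 - N_{\mathrm{blocked}}$ provided $\mathrm{m}(A_1 \cap A_2)$ is identified with the overlap bandwidth $B(\bm{r}, \mathcal{S}_{\mathrm{blocked}}(\bm{r}))_{\mathrm{overlap}}$ of (\ref{theorem2_2}). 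When the disjointness hypothesis (\ref{proof_theorem1_1}) of Proposition \ref{theorem1} holds the cross term vanishes and this collapses to (\ref{proof_theorem1_2}); Proposition \ref{prop3} treats the genuinely overlapping regime.

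The crux is the identification of the wavenumber-domain overlap with the geometrically blocked sub-aperture. Here I would exploit that, by (\ref{spatial_frequency}) and (\ref{spatial_frequency_x_y}), $\bm{\kappa}(\bm{r}, \bm{s})$ is $\kappa_0$ times the transverse part of the unit line-of-sight direction $\bm{\hat{v}} = (\bm{r}-\bm{s})/\|\bm{r}-\bm{s}\|$, and that on the relevant half-space the sign of $\hat{v}_z$ is fixed, so the map $\bm{\hat{v}} \mapsto \bm{\kappa}$ is a bijection. Consequently two source points produce the same wavenumber if and only if they lie on a common ray emanating from $\bm{r}$. A wavenumber thus belongs to $A_1 \cap A_2$ exactly when the ray from $\bm{r}$ carrying it pierces both apertures, i.e. when the corresponding point of $\mathcal{S}_2$ sits in the shadow that $\mathcal{S}_1$ casts toward $\bm{r}$. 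This gives the set identity $A_1(\bm{r}) \cap A_2(\bm{r}) = \mathcal{A}(\bm{\kappa}(\bm{r}, \bm{s}))_{\bm{s} \in \mathcal{S}_{\mathrm{blocked}}(\bm{r})}$, and hence the measure identity $\mathrm{m}(A_1 \cap A_2) = B(\bm{r}, \mathcal{S}_{\mathrm{blocked}}(\bm{r}))_{\mathrm{overlap}}$, completing (\ref{theorem2_1})--(\ref{theorem2_2}).

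It remains to make $\mathcal{S}_{\mathrm{blocked}}(\bm{r})$ explicit, which is where the hypotheses on $h_0$ and $\Delta d$ enter and which I expect to be the main obstacle. Since both apertures are parallel planes at depths $d_1$ and $d_2 = d_1 + \Delta d$, the shadow of $\mathcal{S}_1$ on the plane of $\mathcal{S}_2$ is its perspective projection from $\bm{r}$, obtained by scaling displacements from $\bm{r}$ by the factor $d_2/d_1$; intersecting this projected rectangle with $\mathcal{S}_2$ yields a region that, because both rectangles are axis-aligned, factorizes as a Cartesian product of an $s_x$-interval and an $s_y$-interval, namely the expression (\ref{m_s_blocked}). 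The condition $h_0 \geq (R_y+S_y)/2$ together with $\Delta d < \big(2S_y/(2h_0-S_y-R_y)\big)d_1$ is precisely what guarantees, via Proposition \ref{corollary1}, that this shadow is nonempty yet does not cover all of $\mathcal{S}_2$, so the blocking is partial and the interval endpoints take the claimed form. The delicate part is the bookkeeping of these endpoints as $\bm{r}$ ranges over the BS aperture --- in particular verifying that the clipping against the boundary of $\mathcal{S}_2$ behaves uniformly over $\mathcal{R}$ under the stated hypotheses --- after which $N_{\mathrm{blocked}}$ follows by substituting (\ref{m_s_blocked}) into (\ref{theorem2_2}).
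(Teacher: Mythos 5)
Your proposal is correct and takes essentially the same route as the paper's own (much terser) proof: inclusion--exclusion for the Lebesgue measure of the non-disjoint union in the wavenumber domain, with the cross term $\mathrm{m}(A_1 \cap A_2)$ identified as the bandwidth of the perspective projection (shadow) of User 1's surface from $\bm{r}$ onto User 2's plane, clipped against $\mathcal{S}_2$ to give (\ref{m_s_blocked}). Your ray-direction bijection argument for equating the wavenumber-domain overlap with the blocked sub-aperture simply makes rigorous what the paper asserts as ``the geometric relationship between the HMIMO surfaces,'' so it is a useful elaboration of the same approach rather than a different method.
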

\begin{proof}
    The proof follows directly from (\ref{proof_theorem1_1}) and (\ref{proof_theorem1_2}), by recalling the definition for the union of non-disjoint sets. Eq. (\ref{m_s_blocked}) is obtained using the geometric relationship between the HMIMO surfaces of User 1 and User 2. As shown in Fig. \ref{fig:figure_blocked_and_not_scenario_2}, the blocked region $\mathcal{S}_{\rm {blocked}}(\bm{r})$ can be obtained by projecting User 1's HMIMO surface onto User 2's HMIMO surface.
\end{proof}
\begin{figure*}[!hbpt]
\begin{equation}
      \label{m_s_blocked}
      \!\mathcal{S}_{\text{blocked}}(\bm{r}) \!=\!\!
      \begin{cases}
        \!\!\left[\!-\frac{S_x}{2}, \frac{S_x}{2}\!\right] \!\!\otimes\!\! \left[\!-\frac{S_y}{2}, -\frac{\Delta d}{d_1} (h \!+\! r_y) \!+\! \frac{S_yd_2}{2d_1} \right]\!, & \hspace{-4pt} \text{if} \hspace{2pt} (r_x, r_y) \!\in \!\!\left[ -\frac{S_x}{2}, \frac{S_x}{2} \right] \!\otimes\! \left[-\frac{R_y}{2}, -h_0 \!+\! \frac{S_y}{2} \!\left(\!  \frac{d_2\!+\!d_1}{d_2\!-\!d_1}\!\right) \right] \\[5pt]
          \!\!\left[\!-\frac{\!S_x}{2},\! -\frac{r_x \Delta d}{d_1} \!+\! \frac{S_x d_2}{2d_1}\! \right] \!\!\otimes\!\! \left[\!-\frac{\!S_y}{2}, \!-\frac{\Delta d}{d_1} (h \!+\! r_y) \!+\! \frac{S_y d_2}{2d_1} \!\right]\!, & \hspace{-4pt} \text{if} \hspace{2pt} (r_x, r_y) \!\in \!\!\left[\frac{S_x}{2}, \frac{S_x}{2} \!\left(\!  \frac{d_2\!+\!d_1}{d_2\!-\!d_1}\!\right) \right] \!\!\otimes\!\! \left[\!-\frac{R_y}{2}, -h_0 \!+\! \frac{S_y}{2} \!\left(\!  \frac{d_2\!+\!d_1}{d_2\!-\!d_1}\!\right)\!\right] \\[5pt]
          \!\!\left[\frac{r_x \Delta d}{\!d_1} \!-\! \frac{S_x d_2}{2d_1} \!, \frac{\!S_x}{\!2}\!\right] \!\!\otimes\!\! \left[\!-\frac{\!S_y}{\!2}, -\frac{\Delta d}{\!d_1}  (h \!+\! r_y) \!+\! \frac{S_y d_2}{2 d_1}\!\right]\!, & \hspace{-4pt}\text{if} \hspace{2pt} (r_x, r_y) \!\in \!\!\left[\!-\frac{\!S_x}{\!2} \!\left(\!  \frac{d_2\!+\!d_1}{d_2\!-\!d_1}\!\right)  \!,\! -\frac{\!S_x}{\!2}\!\right] \!\!\otimes \!\!\left[\!-\frac{\!R_y}{\!2}, \!-h_0 \!+\! \frac{\!S_y}{\!2} \!\left(\!  \frac{d_2\!+\!d_1}{d_2\!-\!d_1}\!\right)\!\right] \\[5pt]
          0 & \hspace{-4pt} \text{otherwise} 
      \end{cases}
      \vspace{-0.6em}
\end{equation}
\hrulefill
\vspace{-1.5em}
\end{figure*}

In conclusion, the number of effective DoF in the multi-user case can be determined by considering the spatial blocking between the users' HMIMO surfaces. These effects are quantified by identifying the regions on each user's HMIMO surface that are blocked by the surfaces of the other users. The number of effective DoF is obtained by subtracting the number of effective DoF that correspond to the blocked regions from the sum of the number of effective DoF for all users, each one calculated in the single-user setting.

\section{Simulation Results} \label{SIM}
In this section, simulation results are provided to validate the effectiveness of the proposed method to calculate the number of effective DoF in near-field channels.
The wavelength is set to $\lambda = 0.01$ m, and the sizes of the receiving and transmitting HMIMO surfaces are set to $R_x = R_y = 1.4$ m and $S_x = S_y = 0.3$ m, respectively.
The height of the BS surface is set to $h_0 = 5$ m, and the distance between the BS and User 1 is $d$.
Since the number of effective DoF sharply decreases with the increase of $d$, and it increases with the increase of $R_x$ and $R_y$, we illustrate the results as a function of the relative distance $F = (d^2)/(R_xR_y)$, with values ranging from $15$ dB to $30$ dB. 

To validate the proposed method, we adopt the singular value decomposition (SVD) method to estimate the number of effective DoF numerically. 
Each HMIMO surface is divided into small patches of size $\frac{\lambda}{3} \!\times\! \frac{\lambda}{3}$, where the current distribution is assumed constant~\cite{xie2023performance}. 
Then, we solve the eigenfunction problem by applying the SVD to the corresponding channel matrix. 
The number of effective DoF is calculated by counting the number of eigenvalues whose normalized (with respect to the largest eigenvalue) values are greater than 0.5~\cite{dardariCommunicatingLargeIntelligent2020, Renzo2023}.

Fig. \ref{fig:sim_for_single_user} shows the number of effective DoF in the single-user setting as a function of $F$ and for different values of $S_x S_y$.
It can be observed that the theoretical analysis in (\ref{DoF_one_user}) is consistent with the SVD. 
As the distance between the BS and the user increases (maintaining the sizes of the HMIMO surfaces), the number of DoF decreases sharply. 
This is because the local spatial bandwidth decreases with the increase of the distance,
which results in decreasing the number of orthogonal sub-channels.
When the distance approaches infinity, the effective number of DoF approaches one, as expected in far-field channels.

\begin{figure}
    \setlength{\abovecaptionskip}{-0.3em}
    \centering
    \includegraphics[width=0.7\linewidth]{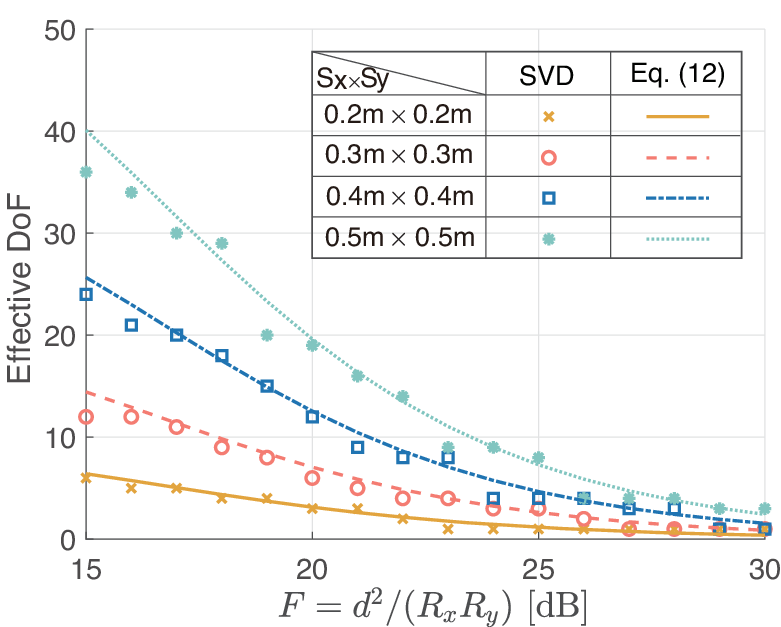}
    \caption{Number of effective DoF - Single user case.}
    \label{fig:sim_for_single_user}
    \vspace{-1.5em}
\end{figure}
\begin{figure}
    \setlength{\abovecaptionskip}{-0.3em}
    \centering 
    \includegraphics[width=0.71\linewidth]{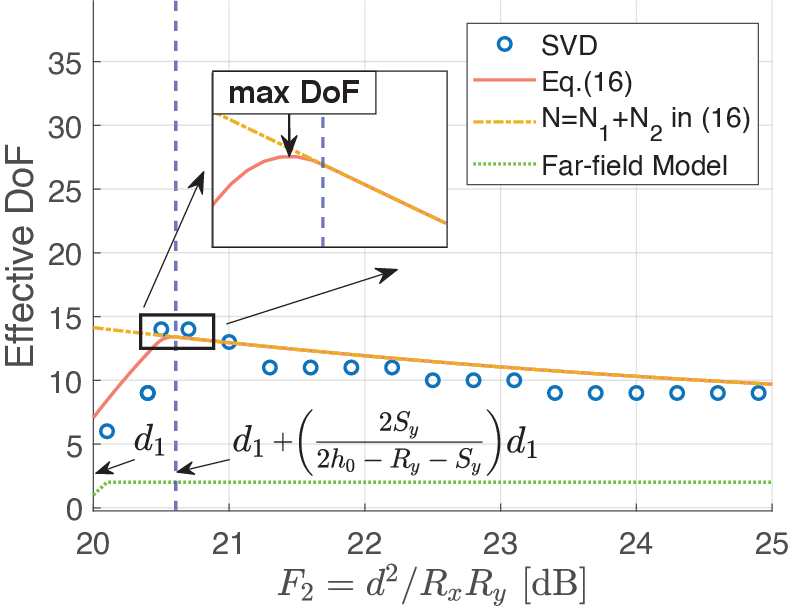}
    \caption{Number of effective DoF - Two users case.}
    \label{fig:sim_for_two_users}
    \vspace{-1.5em}
\end{figure}

Fig. \ref{fig:sim_for_two_users} shows the number of effective DoF in the two-user setting as a function of the distance $d_2$, given $d_1$.
The distance, $d_1$, between the BS and User 1 is set to $F_1 = d_1^2/(R_x R_y)=20 {\rm dB}$, and the distance, $d_2$, between the BS and User 2 is set so that $F_2 = d_2^2/(R_x R_y)$ is varied from 20 dB to 25 dB.
The number of effective DoF based on the far-field plane-wave model is illustrated as a reference as well.
The results show the good accuracy of the proposed analytical framework compared with the SVD.
When $F_2$ is small, the gap between the sum of the number of effective DoF for all the users and the SVD estimate is large, indicating that the impact of spatial blocking is significant in the considered setup.
As $F_2$ increases, the sum of the number of effective DoF for all the users gradually approaches the SVD estimate, since the blocking between the HMIMO surfaces of the users becomes negligible.
When the distance between the two users is larger than $ (\frac{2S_y}{2h_0-S_y-R_y})d_1$, the number of effective DoF can be obtained by directly summing the number of effective DoF of all the users, 
which is consistent with the theoretical analysis in Section \ref{section3}.
In the considered setup, the average reduction in the number of effective DoF due to spatial blocking is $21.2\%$.
Moreover, we observe that the number of effective DoF initially increases and then decreases as the distance between the two users increases. 
This suggests an optimal distance at which the system achieves the maximum number of effective DoF. 
It is interesting to note that the maximum number of effective DoF occurs when User 2's surface is partially blocked by User 1's surface.
The reason is that the increase in the number of effective DoF due to the reduced distance offsets the loss in the number of effective DoF caused by the blocking. 

\section{Conclusion} \label{Conclusion}
In this paper, we have proposed an analytical framework to calculate the number of effective DoF in near-field channels and in the presence of multiple HMIMO users.
The proposed method is based on the computation of the spatial bandwdith of the received electromagnetic field. Specifically, the impact of spatial blocking between two HMIMO surfaces has been taken into account, which is proved to decrease the number of effective DoF. 
In addition, numerical results have highlighted that the presence of spatial blocking has a non-negligible impact on the number of effective DoF and that there exists an optimal distance between the users that maximizes the number of effective DoF, which depends on the trade-off between transmission distance and spatial blocking.

\begin{appendices} 
\section{Derivation of (\ref{spatial_bandwidth_approx_2_2})} \label{appendix1}
According to \eqref{spatial_frequency_x_y}, when evaluated at the observation point $\bm{r}$, ${\rm d}\kappa_x {\rm d}\kappa_y$ can be calculated using the coordinate transformation: 
\vspace{-1em}
\begin{align}
    \label{coordinate_transformation}
    {\rm d}\kappa_x {\rm d}\kappa_y = \left\vert J \right\vert {\rm d}s_x {\rm d}s_y,
\end{align}
where $\left\vert J \right\vert$ denotes the Jacobian determinant of the mapping $(s_x, s_y) \!\rightarrow\! (\kappa_x, \kappa_y)$. From \eqref{spatial_frequency_x_y}, $\left\vert J \right\vert$ can be expressed as 
\begin{align}
    \label{Jacobi_determinant}
    \left\vert J \right\vert = \left \vert \begin{matrix}
        \frac{\partial \kappa_x}{\partial s_x} &  \frac{\partial \kappa_x}{\partial s_y}\\
          \frac{\partial \kappa_y}{\partial s_x}& \frac{\partial \kappa_y}{\partial s_y}
    \end{matrix} \right \vert
    \!=\! \frac{\kappa_0^2 d^2}{\left(d^2 + (r_x - s_x)^2 + (r_y - s_y)^2\right)^{2}}.
\end{align}
By combining (\ref{coordinate_transformation}) and (\ref{Jacobi_determinant}), the local spatial bandwidth $B(\bm{r}, \mathcal{S})$ in (\ref{spatial_bandwidth_approx_2_2}) is obtained.

\section{Derivation of (\ref{DoF_one_user})} \label{appendix2}
According to (\ref{DoF_exact_integral}), when $S_x, S_y \ll d$, we obtain
\begin{align}
    \label{DoF_approx_integral}
    N \approx & \frac{S_x S_y}{\lambda^2} \int_{-\frac{R_x}{2}}^{\frac{R_x}{2}} \int_{-\frac{R_y}{2}+h_0}^{\frac{R_y}{2}+h_0} 
    \frac{d^2}{\left(d^2+r_x^2+r_y^2\right)^2} {\rm d}r_x {\rm d}r_y,
\end{align}
where the approximation is obtained by applying the $0$th-order Taylor approximation of the integrand function. 
By calculating the integral in (\ref{DoF_approx_integral}), the expression in (\ref{DoF_one_user}) is obtained.

\end{appendices}

\bibliographystyle{ieeetr}
\bibliography{DoF}

\vfill

\end{document}